\documentclass[authoryear,12pt,3p,1in]{jowarticle}

\usepackage[english]{babel}
\usepackage[utf8]{inputenc}
\usepackage{amsmath,amsfonts,amssymb,amsthm}
\usepackage{graphicx}
\usepackage[colorinlistoftodos]{todonotes}
\usepackage{natbib}
\usepackage{blindtext}
\usepackage{subcaption}
\usepackage{caption}
\usepackage{setspace}
\usepackage{tikz-cd}
\usepackage{tensor}
\usepackage{accents}
\usepackage[shortlabels]{enumitem}
\newlength{\dhatheight}

\newtheorem{thm}{Theorem}

\vfuzz2pt 
\hfuzz2pt 
\setcounter{secnumdepth}{2}
\makeatletter
\renewcommand\section{\@startsection{section}{1}{\z@}{-3.25ex plus -1ex minus -.2ex}{1.5ex plus .2ex}{\normalsize\bf}}
\renewcommand\subsection{\@startsection{subsection}{2}{\z@}{-3.25ex plus -1ex minus -.2ex}{1.5ex plus .2ex}{\normalsize\bf}}
\renewcommand\subsubsection{\@startsection{subsubsection}{3}{\z@}{-3.25ex plus -1ex minus -.2ex}{1.5ex plus .2ex}{\normalsize\bf}}
\makeatother

\begin{document}
\begin{frontmatter}
\title{Natural Theories}
\author{James Owen Weatherall}\ead{james.owen.weatherall@uci.edu} 
\address{Department of Logic and Philosophy of Science \\ University of California, Irvine}
\author{Eleanor March}\ead{eleanor.march@philosophy.ox.ac.uk}
\address{Faculty of Philosophy \\ University of Oxford}

\date{\today}

\begin{abstract}
We consider the class of physical theories whose dynamics are given by \emph{natural equations}, which are partial differential equations determined by a functor from the category of $n$-manifolds, for some $n$, to the category of fiber bundles, satisfying certain further conditions.  We show how the theory of natural equations clarifies several important foundational issues, including the status and meaning of minimal coupling, symmetries of theories, and background structure. We also state and prove a fundamental result about the initial value problem for natural equations.
\end{abstract}
\end{frontmatter}

\onehalfspacing
\section{Introduction} \label{sec:intro}

\citet{March+Weatherall} have recently defended a view on which the oft-invoked desideratum that a physical theory be `generally covariant' should be seen as requiring (at least) that the mathematical objects used to represent physical quantities and their states be functorial over smooth manifolds.\footnote{There is a straightforward sense in which this proposal is a development of arguments from \citet[p.~48]{Misner+etal}, who define general covariance as the requirement that `every physical quantity must be describable by a geometric object' \'a la \citet{Nijenhuis}.} They make this idea precise using the formalism of \emph{natural bundles} \citep{Nijenhuis,Salvioli,Terng,Kolar+etal}, which, as we discuss below, are functors from certain categories of smooth manifolds to categories of fiber bundles. The justification for this proposal is that general covariance, however it is explicated, apparently requires that salient mathematical objects have a well-defined behavior under the action of diffeomorphisms. In other words, before one can assess whether some object has the `correct' behavior under diffeomorphisms (or coordinate transformations), one must first say how diffeomorphisms act at all. Functoriality is an attractive and mathematically well-motivated way to implement this requirement.

As March and Weatherall emphasize, however, requiring that the objects used to represent physical quantities be valued in natural bundles is at most a necessary condition for general covariance. As they put it, `any full account' of general covariance `would require that the principal claims and relationships asserted by a theory are preserved under the actions of diffeomorphisms on the objects concerned' (p.~4). Our goal here is to complete that project.\footnote{Similar ideas are discussed by \citet{Fatibene+Francaviglia}, though they focus on variational theories and do not discuss the main applications of the idea that we introduce here.  See also \citet{FletcherWeatherallpart2}.} The idea is to consider (finite-order) partial differential equations on natural bundles that are themselves `natural', in the sense of being functorial over smooth manifolds. A natural equation on a natural bundle would be one that depends only on the natural bundle structure and derivatives of that structure to finite order, in such a way that if the equation is satisfied for some section of some bundle, then it is also satisfied by the pushforward of that section along any smooth embedding. This proposal requires one to associate a given equation with all (suitable) smooth manifolds of a given dimensions and to specify how the equation behaves under the action of smooth embeddings. A \emph{natural theory} would be one whose fields and equations are natural.

Natural theories are of interest for a number of reasons, beyond being a plausible explication of general covariance. There are important examples of natural theories, such as general relativity, special relativity, or, in five or more dimensions, Gauss-Bonnet gravity. Not all physical theories are natural, but as we discuss below, a non-natural theory can often be reconceived as natural by moving to a different (natural) bundle. This construction, which we call `naturalization', can be though of as a kind of Kretschmannization procedure \citep{Kretschmann}, where one identifies background structure and makes explicit how that structure itself transforms under the action of smooth embeddings. The famously vexed criterion of `minimal coupling' for a system of equations can also be explicated in this setting, as a claim about the minimal background structure needed to naturalize a system of equations in the context of relativity theory.\footnote{Thus, this approach can be seen to address concerns raised by \citet{Fletcher2020}, \citet{WeatherallDogmas}, and \citet{FletcherWeatherallpart2} about the hyperintensionality of minimal coupling, and it suggests a path forward for capturing criteria on classical field theories that are usually expressed syntactically.}  Naturalization also provides a new perspective on the role and status of symmetries of physical theories.  We will disambiguate three senses of symmetry and use them to clarify the status of Earman's famous matching principles \citep{WEST}.

We will also prove two important theorems about natural theories. The first, which is somewhat technical, is motivated in section \ref{sec:natTheories} and stated and proved in \ref{app:}. The second is easy to state, and of immediately relevance to the philosophical literature. Let us say that a system of differential equations admits a well-posed initial value problem if, given some suitable initial data, there exists a unique solution, at least locally, that coincides with that initial data. We then have the following:
\begin{thm}
No sufficiently rich natural system of equations admits a well-posed initial value problem.\end{thm}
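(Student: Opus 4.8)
The plan is to run a version of the hole argument: naturality under smooth embeddings --- hence under every diffeomorphism of a manifold to itself --- forces a radical failure of uniqueness. Suppose for contradiction that $E$ is a sufficiently rich natural system of equations on a natural bundle $F$ over $n$-manifolds and that $E$ admits a well-posed initial value problem. In the spirit of the informal definition above, unpack the latter as: there is a class of admissible hypersurfaces $\Sigma$ in manifolds $M$, together with a notion of initial data along $\Sigma$ (a finite-order jet of a section transverse to $\Sigma$), such that every solution $\phi$ of $E$ restricts to admissible data on suitable $\Sigma$, and each admissible datum is induced by exactly one solution on some neighborhood $O$ of $\Sigma$. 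The role of ``sufficiently rich'' is to supply a solution with non-trivial local structure: concretely, a solution $\phi$ on some $M$, an admissible $\Sigma \subset M$, and an open ball $U$ inside the uniqueness neighborhood $O$ of $\Sigma$ with $\overline{U}$ compact and $\overline{U} \cap \Sigma = \emptyset$, such that $\phi|_U$ is not fixed by every diffeomorphism of $M$ supported in $U$.

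Given this, pick a diffeomorphism $\psi \colon M \to M$ with support contained in $U$ and $\psi_* \phi \neq \phi$. Since $\psi$ is in particular a smooth embedding, naturality of $E$ gives that $\psi_* \phi$ is again a solution of $E$. Because $\psi$ is the identity on a neighborhood of $\Sigma$, the section $\psi_* \phi$ induces exactly the same initial data on $\Sigma$ as $\phi$ does; and because $\psi_* \phi$ agrees with $\phi$ outside $U$, it is defined on all of $O$. Finally $\psi_* \phi \neq \phi$, and since they can only differ inside $U \subset O$ they differ as sections over $O$. So $\phi$ and $\psi_* \phi$ are two distinct solutions of $E$ on $O$ inducing the same admissible data on $\Sigma$, contradicting well-posedness. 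Hence no sufficiently rich natural $E$ admits a well-posed initial value problem.

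The step I expect to be the real obstacle --- and the one where the precise content of ``sufficiently rich'' lives --- is guaranteeing the existence of the diffeomorphism $\psi$ with $\psi_* \phi \neq \phi$, together with the geometric book-keeping that situates the ball $U$ inside a uniqueness neighborhood but off the data surface. A natural system could in principle be so rigid that every one of its solutions is locally homogeneous (trivial equations on the point bundle $M \mapsto M$, or equations admitting only canonical flat structures as solutions), and then the argument stalls; ``sufficiently rich'' is exactly the hypothesis ruling this out, and the theorem should be stated so that all the motivating examples --- general relativity, its matter couplings, Gauss--Bonnet gravity --- qualify. For those cases non-rigidity is a soft dimension count: the stabilizer of $\phi|_U$ in the group of diffeomorphisms supported in $U$ is cut out by a finite-order differential condition and is at most a finite-dimensional Lie pseudogroup, whereas that diffeomorphism group is infinite-dimensional, so some compactly supported $\psi$ must move $\phi$ unless $\phi|_U$ is trivial (locally constant, or a locally canonical section of $F$). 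It is worth remarking that this is precisely the obstruction that general relativity sidesteps by reading its initial value problem as well-posed only up to diffeomorphism --- i.e.\ by gauge-fixing --- which is the moral this theorem is meant to sharpen.
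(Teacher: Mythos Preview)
Your argument is correct and is essentially the same hole-argument construction the paper gives: pick a solution, a compactly supported diffeomorphism that moves it, and an initial surface disjoint from the support, then observe that naturality produces a second solution with identical data. Your write-up is more explicit than the paper's about unpacking well-posedness and about where the work of ``sufficiently rich'' lies (the paper simply \emph{defines} sufficient richness as the existence of a diffeomorphism $\chi$ with $\chi_*\circ\varphi\neq\varphi\circ\chi$ and asserts without further argument that a compactly supported such $\chi$ can then be found), but the strategy is identical.
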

\noindent (We will define `sufficiently rich' below.) This result is of intrinsic interest, since it indicates that there is an incompatibility between naturality and a certain conception of determinism.  But perhaps more interesting, for a philosophical audience, is that the proof of this theorem is a version of the infamous hole argument \citep{Earman+Norton,Norton,Pooley}. We take this result to provide insight into the structure of natural equations, but also to advance the literature on the status of the hole argument. In particular, it isolates the sense in which the hole argument rests on an `error' \citep[c.f.][]{WeatherallHoleArg,Bradley+Weatherall}, from the point of view of ordinary mathematical practice, which is that the sense of uniqueness of solution at issue in the argument not natural. One also sees a sense in which the argument is perfectly legitimate, insofar as it arises in the proof of a theorem -- albeit one that shows how a certain construction fails to be natural.\footnote{There is a close connection here to recent work on determinism by \citet{Halvorson+etal}, though we will postpone a detailed discussion of that connection to future work.}  

The paper will proceed as follows. We begin in the next section with some background preliminaries on natural bundles and jet bundles. In section \ref{sec:natTheories}, we will introduce natural equations and natural theories. Section \ref{sec:naturalization} will introduce naturalization, and section \ref{sec:MinCoup} will apply it to minimal coupling.  Section \ref{sec:Earman} discusses symmetries of theories.  Then in section \ref{sec:HoleArg} we will prove Theorem 1 and discuss its significance. Section \ref{sec:conclusion} concludes.

\section{Natural Bundles and Jet Bundles}\label{sec:preliminaries}

Natural bundles are a way of capturing the idea of a geometric object as developed by \citet{Nijenhuis} and others.\footnote{The standard and encyclopedic resource on natural bundles and natural operators is \citep{Kolar+etal}.} Very roughly, natural bundles are bundles that depend on only the structure of their base space, in the sense that given any appropriate smooth manifold, the bundle can always be constructed; and any appropriate smooth map on the base space `lifts' uniquely to a bundle morphism.  To make this idea precise, let $\mathcal{M}_n$ denote the category of smooth, $n$-dimensional manifolds, with smooth embeddings as morphisms.\footnote{Our definition of $\mathcal{M}_n$ here is similar to \citet{Palais+Terng}. An alternative approach, following \citet{Kolar+etal}, would be to define the category of $n$-manifolds to have local diffeomorphisms as arrows.} Let $\mathcal{FB}$ be the category whose objects are smooth fiber bundles and whose morphisms are smooth bundle morphisms. A \emph{natural bundle} (over $n$-manifolds) is a functor $F:\mathcal{M}_n\rightarrow \mathcal{FB}$ such that (1) for every object $M$ of $\mathcal{M}_n$, $FM$ is a bundle whose base space is $M$; and (2) for every morphism $\varphi:M\rightarrow N$ of $\mathcal{M}_n$, $F\varphi$ is of the form $(\varphi_*,\varphi)$, where the maps $\varphi_*$ induces from fibers of $FM$ to fibers of $FN$ are diffeomorphisms.\footnote{The notation is suggestive, but not all natural bundle functors involve the pushforward construction.}

Functoriality, here, is what enforces the two informal conditions above.  Natural bundles associate smooth bundles with (any) smooth $n$-manifolds and smooth bundle morphisms with smooth embeddings (whilst preserving composition and identity). Meanwhile, the restriction to `appropriate' smooth manifolds and smooth maps gets realized in the definition of the category $\mathcal{M}_n$.  Note that we have opted for a general category in our definition of natural bundle, but for some purposes we will restrict to a more specialized (full) subcategory.  In particular, we will often consider the category $\bar{\mathcal{M}}_n$ of smooth $n-$manifolds admitting a Lorentzian metric \citep{Geroch+Horowitz,ONeill}.

To get an intuitive handle on this idea, consider, as an example, tangent bundles. Any smooth manifold $M$ determines a bundle $TM\rightarrow M$, and any well-behaved smooth map $\varphi:M\rightarrow N$ on the base space determines a smooth map from $TM$ to $TN$ via the pushforward construction. Thus tangent bundles realize the two properties we have identified. Other examples include cotangent bundles, (tangent) frame bundles, tensor bundles, bundles of $n$-forms for fixed $n$, and so on.  Note that officially, a natural bundle is not a bundle, but a functor from the category of smooth $n$-manifolds to fiber bundles. Nevertheless, we will (by abuse of language) continue to use the term `natural bundle' to refer to \emph{both} the functor and the image of objects $M$ in $\mathcal{M}_n$ under that functor. (This is similar to how `tangent bundle' is used to refer both to the general construction and individual instances.)

Now we move on to discuss jet bundles. Recall that if $M$ and $N$ are smooth manifolds, and $f:U\rightarrow N$, $g:V\rightarrow N$ are smooth maps defined on open neighbourhoods $U$, $V$ of some $p\in M$, then $f$ and $g$ are said to be $k$-equivalent at $p$ iff they agree on all their partial derivatives up to order $r$ at $p$ (in any, and thus every, local coordinate charts containing $p$, $f(p)$, $g(p)$). A $k$-jet at $p$ is an equivalence class $[f]_p$ of smooth maps that are $k$-equivalent at $p$, and the $k$-jet at $p$ containing $f$ is denoted $j^k_pf$. Now let $B\overset{\pi}{\rightarrow}M$ be a smooth fiber bundle. Then we can construct a smooth bundle $J^kB$ (the $k$th jet bundle of $B$) whose total space is the space of all $k$-jets of local sections of $B$, for all points in $M$. This bundle is equipped with a variety of projection maps: for any $\ell\leq k$ one has the smooth map $\pi^k_{ell}:J^kB\rightarrow J^{\ell}B$, defined by $\pi^k_{\ell}(j^k_p\psi)=j^{\ell}_p\psi$, which (together with $\pi$) make it a smooth bundle over $M$ and over $J^{\ell}B$. 

It is important for what follows to observe that if $X$ is a natural bundle, then $J^kX$ is also natural. To see this, consider that if $X$ is natural, then any smooth embedding $\varphi: M\rightarrow N$ will determine a smooth bundle morphism $(\varphi^*,\varphi):XM\rightarrow XN$. But any smooth bundle morphism $(\varphi^*,\varphi)$ on $XM$ lifts uniquely to a smooth bundle morphism $j^k\varphi$ on $J^kXM$ (its $k$th jet prolongation) via $j^k\varphi(j^k_p\psi)=j^k_{\varphi(p)}\varphi^*(\psi(\varphi^{-1}(p)))$.  This map, restricted to fibers, is a diffeomorphism.  By the same reasoning, one can define a functor $J^1:\mathcal{FB}\rightarrow\mathcal{FB}$ taking every bundle to its first prolongation, which is such that for any natural bundle $X:\mathcal{M}_n\rightarrow \mathcal{FB}$, $J^1X$ is also a natural bundle.\footnote{Here $J^1X=J^1\circ X$.  We suppress the composition symbol `$\circ$' in what follows.} The first jet bundle of $XM\rightarrow M$ is precisely the image of $M$ under $J^1 X$. Similarly, for any $k\geq 1$, we can define $k$th prolongation functors $J^k:\mathcal{FB}\rightarrow\mathcal{FB}$, and then we find that for any natural bundle $X$, $J^kX$ is natural. For any $k,\ell$ with $k>\ell$, there exists a canonical natural transformation $\pi^k_{\ell}:J^k\rightarrow J^{\ell}$ whose factors are the maps $\pi^k_{\ell}$ projecting $J^k$ down to $J^{\ell}$.
 
Note that, despite the suggestive notation, we do \emph{not} have $J^k = \underbrace{J^1\cdot J^1}_{k times}$. Instead, what we find is that for any $k$, there is a canonical natural transformation $\eta^k:J^k\rightarrow \underbrace{J^1\cdot J^1}_{k times}$ taking $k$ jets into 1st jets of 1st jets, etc.~$k$ times.\footnote{Namely, we have $\eta^k(j^k_p\psi)=j^1_p(\underbrace{j^1(j^1...(j^1}_{k-1 times}\psi)))$.} The extends, for any natural bundle $X$, to a natural transformation from $J^kX \rightarrow \underbrace{J^1\cdot J^1}_{k times} X$.  To see why this would be, just consider the special case of scalar fields, and note that at a point, any covector can arise as the derivative of a scalar field. But not every derivative of a covector field at a point can arise as the second derivative of a scalar field. For that, a further constraint is needed: the second derivative must be symmetric. 

\section{Natural Equations and Natural Theories}\label{sec:natTheories}

We now turn to differential equations on natural bundles.  Recall, first, that a $k$th order system of partial differential equations on sections of a bundle $B\rightarrow M$ can be viewed as a closed submanifold $E$ of the (total space of the) $k$th jet bundle over $B$. To see this, note that a partial differential equation is, in particular, an equation, which relates field values and their derivatives to kth order; and a jet bundle has, as fibers over $M$, field values and their first k derivatives.  But any equation (or system of equations) on coordinates of a manifold determines a closed submanifold of points whose coordinate values satisfy the equation.  A solution to a differential equation is a (local) section $\varphi:U\rightarrow B$ of the bundle whose $k$th prolongation lies everywhere within $E$.

It is tempting to go the other direction and define a $k$th order partial differential equations as a closed embedded submanifold of the total space of a $k$th order jet bundle. But this would be too general. Instead, we impose some further (minimal) regularity conditions on the systems of equations we consider here. Fix a bundle $B\xrightarrow{\pi}M$.  In what follows, a $k$th order system of partial differential equations will be a closed submanifold $E$ of $j^kB$ satisfying the following conditions:
\begin{enumerate}
    \item $j^k\pi_{|E}:E\rightarrow M$ is surjective.
    \item $\mathrm{d}\pi_{|E}:TE\rightarrow TM$ is (pointwise) surjective.
\end{enumerate}
Or, putting (1) and (2) together, we require that $j^k\pi_{|E}:E\rightarrow M$ be a fibered manifold. Condition (1) is necessary for $E$ to have solutions about every point in $M$. Condition (2) is necessary for $E$ to have solutions through every point in $E$. 

These general remarks apply to natural bundles as well.  That is, given any natural bundle $X:\mathcal{M}_n\rightarrow\mathcal{FM}$ and $n$-manifold $M$, we can define a system of PDEs on $XM\rightarrow M$ as a submanifold of a jet bundle over $XM$ of appropriate order, satisfying the regularity conditions above.  But when working with natural bundles, we have the resources to consider equations in a somewhat different way.  A natural bundle assigns a certain bundle structure to \textit{every} $n$-manifold in $\mathcal{M}_n$; an equation, on the other hand, is a submanifold of a jet bundle over just one of those assignments. This observation suggests the possibility of defining equations on all $n$-manifolds in $\mathcal{M}_n$, in a uniform and functorial way.  Doing so would make use of the full natural bundle structure, and it would provide a precise characterization of when one has `the same' equation on different $n$-manifolds.

To motivate this idea from a physics perspective, consider that in general relativity, and relativistic field theory more generally, we consider spacetimes based on various manifolds, and yet we are able to make sense of the same equations on those different manifolds.  This suggests that the right setting, or level of generality, for, say, Maxwell's equations and Einstein's equation, is not the jet bundle over a particular $n$-manifold, but rather the family, or better still, category, of jet bundles over arbitrary $n$-manifolds.  That is, the partial differential equations of physics are assignments of submanifolds to jet bundles that range over the full space of possible spacetime manifolds, and not just one manifold.  

How should we think about the `same equations' across different manifolds?  One flat-footed response would be to note that in ordinary practice, the equations of general relativity and relativistic physics are all local, in the sense that they depend just on tensor fields and their derivatives in neighborhoods of a point.  And of course, at that local level, one can think of any $n$-manifold as diffeomorphic to an open region of $\mathbb{R}^n$, where we typically have canonical definitions of the equations of interest.\footnote{We are glossing over several important points discussed by \citet{FletcherWeatherallpart1}, concerning how to treat equations that depend on a background structure, such as a metric.  We will return to that issue in section \ref{sec:MinCoup}.}   From this perspective, it might seem that there is no mystery as to how we can think of the same equation on different manifolds, at least locally. But this observation obscures two important issues.  One of them is that to import an equation from $\mathbb{R}^n$ in the way implied by this construction, one must be able to associate a submanifold of a jet bundle over $\mathbb{R}^n$ with a corresponding submanifold of the jet bundle over regions of other manifolds.  This is always possible, but the \emph{reason} it is possible is that jet bundles are natural, and so any smooth map from a region of $\mathbb{R}^n$ into a region of another manifold $M$ lifts to a map between jet bundles.  The second issue is that while we can and often do think of equations as local in this way, there is a further question about when the local submanifolds these equations define can be patched together to form a consistent submanifold of the jet bundle over the entire manifold.  

These considerations motivate the following definition.  A \emph{natural (k-th order) equation} $(E,e)$ on a natural bundle $X:\mathcal{M}_n\rightarrow \mathcal{FB}$ is a functor $E:\mathcal{M}_n\rightarrow\mathcal{FB}$ and a monic natural transformation $e:E\rightarrow J^k X$.\footnote{Note that the components of any natural transformation between natural bundles are smooth bundle morphism whose action on the base space is the identity.} 
In other words, a natural equation on a natural bundle is an assignment, to every manifold, of a submanifold of some (finite order) jet bundle over the natural bundle over that manifold, in such a way that it is preserved by smooth mappings between $n$-manifolds and their lifts via $X$ and $J^\ell X$, for every $\ell\leq k$.  Figure \ref{fig:naturalEqs} depicts how these natural transformations behave.

\begin{figure}\centering
\begin{tikzcd} EM \arrow[rr, "E\varphi"] \arrow[dd, hook, "e_M"] & & EN \arrow[dd, hook, "e_N"] \\ & & \\
J^kXM \arrow[rr, "J^kX\varphi"] \arrow[d, "j^k\pi_M"] & & J^kXN \arrow[d, "j^k\pi_N"] \\
\vdots \arrow[rr] \arrow[d, "j^2\pi_M"] & & \vdots \arrow[d, "j^2\pi_N"]\\
J^1XM \arrow[rr, "J^1X\varphi"] \arrow[dd, "j^1\pi_M"] & & J^1XN \arrow[dd, "j^1\pi_N"]\\  & & \\
XM \arrow[rr, "X\varphi"] \arrow[dd, "\pi_M"] & & XN \arrow[dd, "\pi_N"]\\  & & \\
            M \arrow[rr, "\varphi"] & & N\end{tikzcd} 
            \caption{\label{fig:naturalEqs} The naturality diagrams for natural equations over natural bundles.}
            \end{figure}

There is a feature of our definition that deserves further comment.  Our general definition of partial differential equations required the submanifold $E$ to be a fibered manifold over $M$. However, we have defined natural equations as functors from manifolds to \emph{fiber bundles}.  This implicitly imposes a further regularity condition on the submanifolds of the jet bundle that we consider as equations:
\begin{itemize}
    \item[3.] $j^k\pi_{|E}:E\rightarrow M$ is a fiber bundle. 
\end{itemize}
One could proceed differently, and consider natural equations as functors, not to the category of fiber bundles, but to the category of fibered manifolds.  But in fact, nothing depends on the choice.  This is because, were we to define natural equations in this more general way, condition (3) would hold automatically as a consequence of naturality.  (Indeed, conditions (1) and (2) need not even be imposed.  See \ref{app:} for a precise statement of these claims and a proof.) So in the special case of natural equations, we can talk about equations as subbundles of natural bundles without loss of generality.

Finally, we define a \emph{natural theory} (or, a \emph{generally covariant field theory}) $(X,E,e)$ to consist of a natural bundle $X$ and a natural equation $(E,e)$ on that bundle.  This definition captures the idea that any (generally covariant) physical theory specifies a space of possible field values, and a system of differential equations on those fields, that are compatible with the manifold structures of spacetime, in the sense that the fields have a well-defined action under smooth maps, and the equations imposed on those fields are invariant under that action.  We can see this definition as fulfilling the promisory note of \citet{March+Weatherall}, who argue that a generally covariant theory should be set on a natural bundle, but also suggest that to complete their characterization of general covariance, one would need to say what it means for the equations of the theory to be preserved under the lifts of smooth maps to those natural bundles.  Our definition of natural theories does precisely that.

\section{Background Structure and Naturalization}\label{sec:naturalization}

Given the above definitions of natural equations and natural theories, one might reasonably ask: are standard examples of field equations (and field theories) considered in physics natural? In general, the answer to this question is `no'. Consider, for example, the source-free Maxwell equations. First, observe that the bundle of two-forms over arbitrary smooth 4-manifolds is a natural bundle $Z: \mathcal{M}_n\rightarrow\mathcal{FB}$. Thus, given a relativistic spacetime, $(M, g_{ab})$, the source-free Maxwell equations determined by $g_{ab}$ and its associated Levi-Civita connection $\nabla$ is an (affine) subbundle $E$ of $J^1ZM\rightarrow ZM\rightarrow M$. Now let $\varphi:M\rightarrow M$ be a diffeomorphism, and consider the action on $E$ of the lift of $\varphi$ under $J^1ZM$. If Maxwell's equations were natural, it would follow that $E$ maps to itself under the lift. It is immediate to see that this condition holds iff $\varphi^*g_{ab}=g_{ab}$. In other words, the Maxwell system on $ZM$ is not natural, since $E$ is not preserved by those smooth mappings that are not isometries of $g_{ab}$.

There is a simple intuition behind this. Maxwell's equations do not depend only on the choice of Faraday tensor, they also depend on the metric.\footnote{They also depend on the Levi-Civita derivative operator.  However, one way of capturing the fundamental theorem of (pseudo-)Riemannian geometry is that, for $n=2$ or $n> 3$, the Levi-Civita derivative operator is the unique first-order natural operator from bundles of (pseudo)-Riemannian metrics to arbitrary natural bundles \citep[\S\S 28.15 \& 33.19]{Kolar+etal}, which provides a precise sense in which the metric itself determines the Levi-Civita operation as long as all operations are natural. (The cases for $n=1$ and $n=3$ are interesting, but not directly relevant.)} Thus, the system is not preserved by those diffeomorphisms which do not fix the metric. But this also gives us an idea of how to `naturalize' Maxwell's equations, namely by making the dependence on the metric explicit. That is, whilst $\varphi^*E$ may not coincide with $E$ relative to $g_{ab}$, it \textit{will} coincide with the equation one would get relative to $\varphi^*g_{ab}$. We will now outline one way of making this idea precise.

Let $B\overset{\pi}{\rightarrow}M$ be a fiber bundle. Following \citet[\S 6]{GerochPDEs}, a \textit{quotient bundle} of $B$ is a bundle $\Hat{B}\overset{\Hat{\pi}}{\rightarrow}M$ and a smooth map $\check{\pi}:B\rightarrow\Hat{B}$ such that $B\overset{\check{\pi}}{\rightarrow}B$ is a smooth fiber bundle and $\pi=\Hat{\pi}\circ\check{\pi}$. Note that this definition has as a consequence that if $\psi:M\rightarrow\Hat{B}$ is a (global) section of $\Hat{B}$, we can define a bundle $B_\psi\overset{\pi_\psi}{\rightarrow}M$ whose total space is $\check{\pi}^{-1}(\psi(M))$ and whose projection map is $\pi_\psi:=\pi_{|B_\psi}$.  This definition can be extended to natural bundles as follows.  Let $X:\mathcal{M}_n\rightarrow\mathcal{FB}$ be a natural bundle. A \emph{natural quotient} of $X$ is a tuple $(X, \check{\pi}, \Hat{X})$, where $\Hat{X}:\mathcal{M}_n\rightarrow\mathcal{FB}$ is a natural bundle and $\check{\pi}:X\rightarrow \Hat{X}$ is an epic natural transformation, such that for each object $M$ of $\mathcal{M}_n$, (1) the component $\check{\pi}_M$ is of the form $\check{\pi}_M=(\pi_{M},1_M)$, (2) $XM\overset{\pi_M}{\rightarrow}\Hat{X}M$ is a fiber bundle, and (3) $\pi_{XM}=\pi_{\Hat{X}M}\circ\pi_{M}$.


How does this help? Let $B\xrightarrow{\pi} M$ be a fiber bundle and suppose that $E$ a $k$th-order equation on $B$.  (Note that we are \emph{not} assuming $B$ or $E$ are natural -- to the contrary, the construction is of greatest interest when $E$ is not natural.)  Now suppose that $\mathfrak{B}:\mathcal{M}_n\rightarrow\mathcal{FB}$ is a natural bundle and  $(\mathfrak{E},\mathfrak{e})$ is an $\ell$th-order natural equation on $\mathfrak{B}$, for $\ell\geq k$.  Finally, suppose that $\mathfrak{B}$ admits a natural quotient $(\mathfrak{B},\check{\pi},\Hat{\mathfrak{B}})$; that for any $n$-manifold $N$ and (global) section $\phi: N\rightarrow \hat{\mathfrak{B}}N$, the restriction of $\mathfrak{E}$ to $J^k\Hat{\mathfrak{B}}N_\phi$ is an equation on $J^k\mathfrak{B}N_\phi$; and that that $\Hat{\mathfrak{B}}M$ admits a (global) section $\psi:M\rightarrow \Hat{\mathfrak{B}}M$, where $M$ is the base space of $B$, such that (i) $B\overset{\pi}{\rightarrow} M$ is isomorphic (as bundles) to $\mathfrak{B}M_\psi\overset{\Pi_\psi}{\rightarrow}M$, and (ii) the image of $E$ under that isomorphism coincides with the restriction of $\mathfrak{E}$ to $J^k\mathfrak{B}M_\psi$.  Then $\hat{\mathfrak{B}}$ can be thought of as a (natural) bundle of `background fields', different choices of which, on $M$, give rise to different equations on $B$ (including $E$) -- but where now we can also consider the `same' equation, by the lights of this naturalization, on bundles over other $n$-manifolds.  We will say that such a natural quotient $(\mathfrak{B},\check{\pi},\Hat{\mathfrak{B}})$ and natural equation $(\mathfrak{E},\mathfrak{e})$ together constitute a \emph{naturalization} of the equation $E$.  Basically, a naturalization is a natural equation that restricts, for a particular base space and choice of background fields, to the equation with which we began.  (Note that $\mathcal{E}$ may also determine equations on the background fields \citep[\S 6]{GerochPDEs}.)

Naturalization provides a way to say, in general terms, what it is for an equation to `depend' on some background structure, by thinking about the minimal structure which needs to be `added' to the equation in order to naturalize it. Let $B\overset{\pi}{\rightarrow}M$ be a fiber bundle, $E$ a $k$th-order equation on that bundle, and suppose that, for some natural quotient $(\mathfrak{B},\check{\pi},\Hat{\mathfrak{B}})$, we have a naturalization $(\mathfrak{E},\mathfrak{e})$ of $E$. 
In this case, we will say that $E$ depends \textit{only} on the background structure $\Hat{\mathfrak{B}}$, to $\ell$th order, where $\ell$ is the order of $(\mathfrak{E},\mathfrak{e})$. Of course, it might depend on less than this -- in general, one can always construct further naturalizations by adding `superfluous' background fields to the bundle $\hat{\mathfrak{B}}$ or by pulling back a naturalization to higher-order jet bundles of $\mathfrak{B}M$.  Note, too, that a naturalization does more than simply making the idea of dependence on background structure precise.  It also involves a choice of how to extend $E$ to bundles over different manifolds. For these reasons, one should not expect a general recipe for naturalizations.

We now return to Maxwell's equations on some manifold $M$, as described at the beginning of this section. Let $\mathfrak{EM}:\mathcal{M}_n\rightarrow\mathcal{FB}$ be the natural bundle whose fibers, over each $n$-manifold, are manifolds of pairs $(F_{ab},g_{ab})$, where $F_{ab}$ is a two-form and $g_{ab}$ is a Lorentzian metric.  We construct a natural quotient $(\mathfrak{EM},\check{\pi},\widehat{\mathfrak{EM}})$ by defining  $\widehat{\mathfrak{EM}}:\bar{\mathcal{M}}_n\rightarrow \mathcal{FB}$ as the natural bundle whose fibers, over each $n$-manifold, are the manifolds of Lorentzian metrics at each point $p$;\footnote{Recall that $\bar{\mathcal{M}}_n$ is the full subcategory of $\mathcal{M}_n$ whose objects are $n$-manifolds admitting a Lorentzian metic.} and taking $\check{\pi}$ to be the natural transformation that acts by taking pairs $(F_{ab},g_{ab})$ to metrics $g_{ab}$.  Sections of $\widehat{\mathfrak{EM}}$ can be thought of as manifolds with metric; our original spacetime can be thought of as corresponding to one such section. Now we define a natural equation $(\mathfrak{E},\mathfrak{e})$ on $\mathfrak{EM}$ by taking the source-free Maxwell's equation at every point relative to the metric and its Levi-Civita derivative operator at that point. The result is a naturalization of $E$, and $E$ depends only on the background fields $\widehat{\mathfrak{EM}}$ (i.e., the metric and derivative operator).

We said, above, that the existence of an $\ell$th-order naturalization $(\mathfrak{E},\mathfrak{e})$ of $E$ for some natural quotient bundle $(\mathfrak{B},\check{\pi},\Hat{\mathfrak{B}})$ can be used to capture the idea that $E$ depends only on the background structure $\Hat{\mathfrak{B}}$, to $\ell$th order. But we are also interested in when this $\ell$th-order dependence on the background structure $\Hat{\mathfrak{B}}$ exhausts the order of dependence on $\Hat{\mathfrak{B}}$ in $E$. This question is not trivial, for in general, the bundle $B$ might contain fields which encode information about the derivatives of (local sections of) $\Hat{\mathfrak{B}}$. For example, a second-order natural equation on the bundle of Lorentzian metrics can generically be re-expressed as a first-order system of equations on the bundle whose fibres are manifolds of pairs $(\nabla, g_{ab})$ of (torsion-free) derivative operators and Lorentzian metrics. For this reason, when considering the order of dependence of $E$ on $\Hat{\mathfrak{B}}$, it is necessary to impose a further condition on the naturalization $(\mathfrak{E},\mathfrak{e})$ to capture the idea that $E$ depends on $\Hat{\mathfrak{B}}$ to at most $\ell$th-order: namely, that for any object $N$ of $\mathcal{M}_n$, any sections $\psi, \chi:N\rightarrow \Hat{\mathfrak{B}}N$, and any $p\in N$, the condition $j^1_p\psi=j^1_p\chi$ implies $e_N(\mathfrak{E}N)_\psi\cap \pi_{J^1\mathfrak{B}}^{-1}(p)=e_N(\mathfrak{E}N)_\chi\cap \pi_{J^1\mathfrak{B}}^{-1}(p)$. This condition says, in some detail, that the $\ell$th-order dependence on the background structure $\Hat{\mathfrak{B}}$ exhausts the order of dependence on $\Hat{\mathfrak{B}}$ in $(\mathfrak{E},\mathfrak{e})$, in the sense that (local) sections of $\Hat{\mathfrak{B}}$ which are $\ell$-equivalent at some $p$ give rise to (`de-naturalized') equations which agree at $p$. 

\section{Minimal Coupling}\label{sec:MinCoup}
In the previous section, we showed how a system of equations on a particular bundle may be extended to a natural equation, via what we called `naturalization'.  We will now use naturalization to give a precise mathematical statement of when a system of equation is \emph{minimally coupled}. As usually understood, minimal coupling is a heuristic used to construct equations for matter fields in general relativity -- though really, it can refer to any of a family of (\textit{a priori} distinct) heuristics. For example, `minimal coupling' is often identified as the prescription for constructing equations in general relativity from equations in special relativity whereby one replaces all instances of the Minkowski metric $\eta_{ab}$ with $g_{ab}$, and all instances of the Levi-Civita connection for $\eta_{ab}$ with the Levi-Civita connection for $g_{ab}$. As has been noted by \citet[ch.~6.2]{Trautman1965} and \citet[\S 16.3]{Misner+etal}, this procedure is apparently ambiguous (or as \citet{Fletcher2020} puts it, `hyperintensional') for systems of second-order or higher; it also fails to yield second-order systems that are independent of curvature terms.\footnote{In other words, there is a conflict between the above `minimal coupling' prescription and an unqualified version of the statement of minimal coupling we consider below which would apply to systems of equations of arbitrary order, which poses a problem for taking the former as an explication of the latter.} As such, we will not pursue this approach here. 

Instead, we take the idea of minimal coupling to be the following:
\begin{quote}
    Minimally coupled first-order systems of equations depend only on the metric and its first derivatives. 
\end{quote}
This heuristic is often taken as motivation for the above `minimal coupling' prescription. Of course, as it stands, this sense of minimal coupling is also imprecise -- not least because we have not yet said what the relevant notion of dependence is.\footnote{Often, this is implicitly cashed out as the non-appearance of curvature terms in the syntactic expression of a system of equations. We take this to be a non-starter for similar reasons to those given above: non-appearance of curvature terms in the syntactic expression of a system of first-order equations is not preserved under all syntactic manipulations of that system of equations which preserve its space of solutions. cf.~\citet{WeatherallDogmas} and \citet{FletcherWeatherallpart2}.} It turns out naturalization can help, since naturalization makes precise the idea that a system of equations depends (only) on some background structure.  

Let $G:\bar{\mathcal{M}}_n\rightarrow\mathcal{FB}$ be the (first-order, natural) bundle whose fibers are manifolds of Lorentzian metrics.\footnote{This is the bundle we previously called $\widehat{\mathfrak{EM}}$.}
We will say that a first-order natural theory $(X,E,e)$ is \emph{minimally coupled} if there exists a natural quotient $(X,\check{\pi},G)$ of $X$ (over $G$) such that, for any object $N$ of $\bar{\mathcal{M}}_n$, any sections $\psi, \chi:N\rightarrow GN$, and any $p\in N$, the condition $j^1_p\psi=j^1_p\chi$ implies $e_N(\mathfrak{E}N)_\psi\cap \pi_{J^1F}^{-1}(p)=e_N(\mathfrak{E}N)_\chi\cap \pi_{J^1F}^{-1}(p)$.  Now let $X:\bar{\mathcal{M}}_n\rightarrow\mathcal{FB}$ be a natural bundle, let $M$ be an object of $\bar{\mathcal{M}}_n$, and let $E\subset J^1XM\rightarrow XM\rightarrow M$ be a first-order equation on $XM$.  A \emph{minimal coupling} for $E$ is a minimally coupled naturalization $(\mathcal{X},\mathcal{E},e)$ over a natural quotient $(\mathcal{X},\check{\pi},G)$.


There are several features of this explication of minimal coupling worth commenting on further. First, minimal coupling gives us a recipe for naturalizing first-order flat spacetime equations on natural bundles.  A complete discussion of this point requires further machinery, and we will postpone it to future work, but one can see the point informally as follows:  since equations are defined locally in the present context, as submanifolds of an appropriate jet bundle, and since jet bundles over natural bundles are natural, a first-order equation in flat spacetime, depending only on the metric, will determine a unique equation for curved spacetimes via a pushforward construction.\footnote{It is is important to note that at a point, every curved metric agrees to first order with a flat metric \citep{FletcherWeatherallpart1}, and so one can determine the equation in curved spacetime fiberwise via pushforwards along diffeomorphisms that realize that first-order agreement.}  In fact, it will turn out that in the cases where minimal coupling can be applied at all, it will determine a unique naturalization, up to natural isomorphism.  In this sense, minimal coupling, as we define it here, resolves the hyperintentionality and ambiguity concerns raised by other authors.

Second, we highlight that our definition applies only to first-order systems.  This is because if naturalization is the right explication of what it is for an equation to depend on some background structure, then it can make sense to demand that a system of equations depends only on the metric and its first derivatives only when that system is itself first-order.\footnote{This is because, given a $k$th-order ($k>1$) equation $E$ on a natural bundle $X:\mathcal{M}_n\rightarrow\mathcal{FB}$, and a $k$th order naturalization $(\mathfrak{E},\mathfrak{g})$ of $E$ over $G$, there is no way to say that $\mathfrak{E}$ is `genuinely first-order' in the metric whilst being `genuinely $k$th-order' in the physical fields. (Of course, we can say what it is for $\mathfrak{E}$ to be genuinely first-order in \textit{both} the metric and physical fields, i.e., when $E$ coincides with the pullback of some first-order natural equation on $X$ via the map $j^k_1\pi$; we can also say what it is for an equation to be `genuinely first-order' in the physical fields whilst being `genuinely $k$th-order' in the metric, i.e., when $\mathfrak{E}$ is a naturalization of a first-order system of equations on $Y$ which is not the pullback of any lower-order equation on $X$ via some $j^k_l\pi$.)} This gives us a clearer understanding of the sense in which the scope of minimal coupling needs to be `restricted' to first-order systems (or: is `ambiguous'/`hyperintensional' for higher-order systems): first-order dependence on the metric only picks out an `intrinsic' property of a system of equations -- viz., the existence of a naturalization over the bundle of Lorentzian metrics -- when that system of equations is first-order. On the other hand, one also sees a sense in which the scope of minimal coupling is broader than this restriction might appear to suggest, in that it can be applied to \textit{any} system of equations which can be transformed into an equivalent first-order system via an appropriate choice of auxiliary variables. Any ambiguities must be resolved at the stage of identifying these first-order equations.  Lingering curvature terms then just mean the equation is not minimally coupled.

To illustrate this point, consider how our framework can be used to resolve questions about the `correct' application of minimal coupling in e.g.~electromagnetism. Let $(M, g_{ab})$ be a relativistic spacetime, and consider the source-free Maxwell's equation, this time expressed in terms of a vector potential. In the literature, at least two candidates for this equation have been identified (see, e.g.~\citet[\S 16.3]{Misner+etal}):
\begin{align}
    2\nabla_n\nabla^{[n}A^{a]}&=0 \label{eq:MCMaxwell}\\
    2\nabla_n\nabla^{[n}A^{a]}-R\indices{^a_n}A^n&=0.\label{eq:notMCMaxwell}
\end{align}
It is sometimes claimed that both \eqref{eq:MCMaxwell} and \eqref{eq:notMCMaxwell} are minimally coupled, because both reduce to the special relativistic equations.  Nonetheless, \eqref{eq:MCMaxwell} is taken to be the `correct' equation. The standard rationale for this is that \eqref{eq:MCMaxwell} (but not \eqref{eq:notMCMaxwell}) agrees with Faraday tensor formulation of Maxwell's equations upon making the identification $F_{ab}=2\nabla_{[a}A_{b]}$, or perhaps that \eqref{eq:MCMaxwell} (but not \eqref{eq:notMCMaxwell}) is gauge-invariant (see also \citet[p.~390]{Misner+etal} who argue that ``[c]oupling to curvature surely cannot occur without some physical reason'').

Our explication of minimal coupling provides a somewhat different resolution to this problem.  In fact, only equation \eqref{eq:MCMaxwell} is minimally coupled. Equation \eqref{eq:notMCMaxwell} is not. To see this, consider first the equation $2\nabla_n\nabla^{[n}A^{a]}=0$. We can express this as a first-order system by introducing the Faraday tensor as an auxiliary variable:
\begin{align}
    2\nabla_{[a}A_{b]}&=F_{ab} \label{eq:Faraday}\\
    \nabla_{[a}F_{bc]}&=0 \label{eq:Maxwell1}\\
    \nabla_nF^{na}&=0. \label{eq:Maxwell2}
\end{align}
To translate this into our framework, let $X:\bar{\mathcal{M}}_n\rightarrow\mathcal{FB}$ be the (natural) bundle whose fibers are manifolds of pairs $(A_a, F_{ab})$, where $A_a$ is a one-form and $F_{ab}$ a two-form, let $M$ be any object in $\bar{\mathcal{M}}_n$, $g_{ab}$ a Lorentzian metric on $M$, and let $E\subset J^1XM\rightarrow XM\rightarrow (M, g_{ab})$ be the subbundle of $J^1XM$ determined by \eqref{eq:Faraday}-\eqref{eq:Maxwell2}. There is an obvious first-order naturalization of $E$ over the bundle of Lorentzian metrics, which is constructed in exactly the same way as for the Maxwell system in section \ref{sec:naturalization}: we let $F:\bar{\mathcal{M}}_n\rightarrow\mathcal{FB}$ be the natural bundle whose fibers are manifolds of tuples $(A_a, F_{ab}, g_{ab})$, and $\mathfrak{E}\subset J^1FM$ be the equation obtained by taking \eqref{eq:Faraday}-\eqref{eq:Maxwell2} at every point relative to $g_{ab}$ and its first derivatives at that point. (Note that here we have made essential use of the fact that \eqref{eq:Faraday} does not depend on $g_{ab}$, in the sense of section \ref{sec:naturalization}.)

On the other hand, consider the equation $2\nabla_n\nabla^{[n}A^{a]}-R\indices{^a_n}A^n=0$. Again, we can express this as a first-order system by introducing the Faraday tensor as an auxiliary variable: equations \eqref{eq:Faraday} and \eqref{eq:Maxwell1} remain unchanged, but in place of \eqref{eq:Maxwell2} we have
\begin{align}
    \nabla_nF^{na}-R\indices{^a_n}A^n&=0.\label{eq:notMaxwell}
\end{align}
This equation involves explicit second-order dependence on $g_{ab}$ and therefore cannot be naturalized by a first-order natural equation on $F$.  Hence it cannot be minimally coupled.

\section{Symmetries of a Theory}\label{sec:Earman}

The framework presented thus far provides a useful new perspective on another vexed question in philosophy of physics, concerning how to make precise the idea of `dynamical symmetries' or, more generally, `symmetries of  a theory'.  We will not review the large literature on symmetries and their significance.  Instead, our starting point will be to note that there is a standing difficulty associated with giving a precise mathematical definition of the symmetries of a physical theory that can support the myriad inferences philosophers wish it to support.\footnote{\citet{Belot} has made this point especially forcefully.}  

The key difficulty concerns how to make sense of a transformation that preserves a system of equations.  One strategy for capturing this idea, sometimes called the `syntactic approach', considers coordinate transformations that preserve the `form' of an equation \citep{WEST, Brown}.  But as several authors have argued, without an adequate theory of equations, it is not clear when two equations are the same equation written in different forms or when two equations in the same form are in fact distinct \citep[see, e.g.][]{WeatherallDogmas, FletcherWeatherallpart2}.  Another strategy is semantic: one defines symmetries as transformations that take models of a theory to models of a theory (or, solutions to solutions).  But again, many authors have observed that on its own, this approach is inadequate, because arbitrary permutations of models are not symmetries, and it is not clear how to identify, in a non-question-begging way, what else needs to be preserved \citep{Ismael+vanFraassen,Belot}.

Here we explore what can fruitfully be said about symmetries in the context of natural theories, and show how symmetry-based reasoning in this framework does recover some key claims from the philosophy of symmetries literature (at least for the theories that can be described in this framework). Most importantly, the present approach can explain the cogency of Earman's famous symmetry matching principles, (SP1) and (SP2), discussed below.  The strategy is to identify the geometric structure of a natural theory as what needs to be `preserved' by a symmetry transformation.  This approach is related to the syntactic approach just mentioned, in the sense that we take an equation, or system of equations, as a way of defining a certain geometric structure whose invariance properties we care about; and it is related to the semantic approach in the sense that the structure in question is not merely a mathematical expression, but rather a geometrical representation of the (local) `solutions' to a system of equations.

We will give three different definitions of symmetry of a theory, all of which are salient but capture somewhat different ideas.  First, suppose that we have a $k$th order natural theory $(X,E,e)$.  Then there is a certain sense in which every diffeomorphism $\chi:M\rightarrow N$ between $n$-manifolds can be thought of as a symmetry of the theory, since every diffeomorphism lifts to an isomorphism between $J^kXM$ and $J^kXN$ that maps $EM$ into $EN$.  This means, in particular, that every diffeomorphism maps `solutions into solutions' in the sense that if $\varphi:M\rightarrow XM$ solves $EM$, then $\chi_*\circ\varphi\circ \chi^{-1}:N\rightarrow XN$ is a solution to $EN$.  Indeed, natural theories are constructed in order to make this true.  It captures a sense in which such theories can be said to be `diffeomorphism invariant'.  This is one way of substantiating the claim that natural theories are `generally covariant'.

Notice that \textit{every} natural theory is diffeomorphism invariant in the present sense, including, for instance, naturalized (minimally-coupled) Maxwell theory, naturalized (minimally-coupled) Klein-Gordon theory, and so on.  But often a distinction is drawn between a theory such as general relativity, which is said to have all diffeomorphisms as symmetries, and a theory like Maxwell's theory, which is said to have only isometries as symmetries (or, say, in the special case of Maxwell's theory on Minkowski spacetime, Poincar\'{e} transformations as symmetries).  This sense of symmetry is more specialized.  To recover it, suppose we have a natural quotient $(B,\check{\pi},\hat{B})$ and a natural equation $(E,e)$ on $B$ that restricts, for each $n-$manifold $M$ and section $\psi$ of $\widehat{BM}$, to an equation on $BM_{|\psi}$. (Such a structure may arise due to naturalization---though it may also just be considered on its own.) Fix a section $\psi:M\rightarrow \hat{B}$.\footnote{If $(E,e)$ induces an equation on $\hat{B}$, we assume $\psi$ satisfies those equations.}  Now consider the pullback of $EM\subseteq J^kBM\rightarrow BM\rightarrow \hat{B}M\rightarrow M$ to $\psi[M]\subseteq \hat{B}$ along the inclusion map, i.e., the restriction of our equation to the image of the section $\psi$ understood as a submanifold of $\hat{B}$.  We can think of this as a `de-naturalized' equation: it is the natural equation $(E,e)$ for a specific choice of background structure.  

We now introduce two definitions, relativized to this data.  A \emph{spacetime symmetry} is a diffeomorphism $\varphi:M\rightarrow M$ whose lift $\varphi_*$ to $\psi[M]\rightarrow M$ is a bundle isomorphism (or, equivalently, whose lift to $\hat{B}\rightarrow M$ satisfies $\varphi_*\circ\psi=\psi\circ\varphi$).  This is a diffeomorphism that preserves the particular choice of background structure picked out by $\psi$.  To take an example, consider naturalized Maxwell theory.  Suppose $M$ is $\mathbb{R}^4$ and $\psi$ is the section of the bundle of metrics corresponding to Minkowski spacetime.   Then the spacetime symmetries would be those diffeormorphisms of $M$ that preserve the metric -- namely, the Poincar\'e transformations.  More generally, in this case, the spacetime symmetries would be the isometries of a given metric.  Similarly, if the background structure needed to naturalize an equation consists in something other than a Lorentzian metric, the spacetime symmetries would be the diffeomorphisms that preserve particular sections.  Note that spacetime symmetries do not need to preserve \emph{all} sections -- only trivial maps would do that.  Note, too, that in the case where the natural quotient and and natural equation arise from a naturalization, we do not begin by specifying what the spacetime structure is.  We get that from the equation itself, by investigating what background structure is needed to naturalize it.\footnote{There is a further question of whether all background structure should could as \emph{spacetime} structure.  We do not address that question here.  For minimally coupled systems, it seems clear that the background structure is spacetime structure.}  

A \emph{dynamical symmetry}, meanwhile, is an automorphism $\varphi:M\rightarrow M$ whose lift to $J^kB_{|\psi[M]}\rightarrow B_{|\psi[M]}\rightarrow M$ preserves $E$.  In other words, a dynamical symmetry is a diffeomorphism from the base space to itself that preserves the equation, \emph{for some particular choice of background structure}, but which need not preserve the background structure itself.  Here is an example.  Consider, again, the case of the (source-free) naturalized (minimally-coupled) Maxwell equations, and let $\varphi$ be a diffeomorphism that implements a constant conformal transformation, i.e., one that takes the metric $\eta_{ab}$ to $\Omega\eta_{ab}$ for some number $\Omega$, constant on spacetime.  By construction, this map is not a spacetime symmetry.  But it nonetheless preserves the vacuum Maxwell equations.  More generally, the idea is that a dynamical symmetry is one that preserves the equation (and, by consequence, its solutions), irrespective of whether it happens to preserve the spacetime structure.  It is important, however, that the sense of `equation' being preserved here is already relativized to a choice of background structure, since otherwise \emph{every} diffeomorphism would be a dynamical symmetry (for some background or other).

These two definitions of symmetry clarify Earman's famous symmetry matching principles, (SP1) and (SP2), which state that for some theory $T$ \citep[p. 46]{WEST}.  
\begin{quote}
\begin{enumerate}[SP1]
\item Any dynamical symmetry of $T$ is a space-time symmetry of $T$.
\item Any space-time symmetry of $T$ is a dynamical symmetry of $T$.
\end{enumerate}
\end{quote}
Both of these principles are well-defined in the present context, at least for natural theories, and have clear application -- with the caveat that one should also quantify over all possible sections of the background bundle $\hat{B}$.  But we also see a kind of asymmetry between the principles, reflected in some debates about the status of (SP1).  As we have set things up, (SP2) holds trivially.  Every diffeomorphim that preserves background structure in the sense we have described automatically preserves the equation.  This is a direct consequence of naturality.  The issue is not that (SP2) could never fail; rather, it is that naturalization enforces (SP2), and so a theory that violates (SP2) is one that is not (yet) natural.  If one restricts attention to natural(ized) theories, then, (SP2) is guaranteed.

But (SP1) has a different status.  In a sense, (SP1) is a diagnostic for \textit{over} naturalization.  As we noted above, naturalization is not a unique prescription.  In principle, one could build too much structure into a background bundle -- structure that does not need to be preserved by a diffeomorphism to fix the equations, and thus structure that the equations do not depend on.  A failure of (SP1) signals that a different background bundle could be chosen to naturalize the equation with which one began.  Satisfaction of (SP1), meanwhile, suggests that the background bundle is `minimal', in the sense that there is no way to preserve the equation without also preserving the background structure.

Thus we see the difference between (SP1) and (SP2), which concerns the status of background structure in naturalization.  A failure of (SP2) indicates that the equation is not well-behaved under the action of diffeomorphisms.  It is not natural, in the sense that it has some unacknowledged dependence on structure that is neither built out of the manifold nor reflected in the background bundle.  Or in other words, it is not generally covariant.  This strikes us as a bad failure, because it means the equation is in some sense ambiguous.  A failure of (SP1), by contrast, does not introduce serious problems.  It violates occamist norms, perhaps, but there may nonetheless be pragmatic reasons to prefer theories that violate (SP1).  Take, for instance, the example given above, of naturalized Maxwell theory.  As we saw, in the case of the vaccuum theory, there is an extra dynamical symmetry that is not a spacetime symmetry (assuming one takes the background structure to be a metric and derivative operator).  But this symmetry generally goes away once once couples the theory to sources, and there are good reasons to presuppose the background structure needed for the full theory even when restricting attention to the vaccuum sector.  

That said, (SP1) plausibly has more normative force when one assumes that the equation $E$ contains \emph{all} fundamental physical laws, understood as a large system of coupled equations.  In that case, the presence of a dynamical symmetry that is not a spacetime symmetry suggests that the background bundle includes structure that could not matter for the dynamics of any physical system at all, even in principle.  And as \citet{Roberts} observes, this in turn means that that background structure can have no empirical significance, since no physical phenomenon depends on it.  One might then move to a weaker spacetime structure.  But that is not forced on us by naturality.

We now turn to the third notion of a `symmetry of a theory'.  This one begins in a somewhat different place.  The idea is that we have a given a definition of a `natural theory' as a precise mathematical structure consisting of a natural bundle and a natural equation on that bundle, i.e., two functors and a natural transformation.  This suggests the following definition: a \emph{natural theory isomorphism} $(\psi,\Psi):N\rightarrow \tilde{N}$ between natural theories $N=(X,E,e)$ and $\tilde{N}=(\tilde{X},\tilde{E},\tilde{e})$ of order $k$ on manifolds of dimension $n$ is a pair of natural isomorphisms $\psi:X\rightarrow\tilde{X}$ and $\Psi:E\rightarrow \tilde{E}$ such that $\psi_*\circ e=\tilde{e}\circ\Psi$, where $\psi_*:J^k X\rightarrow J^k\tilde{X}$ is the natural isomorphism whose factor at each object $M$ of $\mathcal{M}_n$ is the pushforward of $k$-jets along $\psi_M$.\footnote{An analogous definition could be given for homomorphisms of these structures, by substituting natural transformations for natural isomorphisms.  Note that our definition requires the equations be of the same order, since that is what we will need in what follows.  But one might consider a more general notion of equivalence that allows comparison of equations of different order, to capture the idea that equations of a given order imply equations at higher orders (and that generally equations at one order can be rewritten as equations at lower orders).  We do not pursue that topic here, but suggest it would be an interesting extension.}
We can then say that a \emph{symmetry of a natural theory} is an automorphism of that theory, i.e., an isomorphism from that theory to itself.  

In our view, this definition is `correct' from a mathematical perspective, and because it captures the desideratum that a symmetry of a theory should be some transformation that preserves the structure of the theory.  In that sense it is a principled definition to consider.  But it is also striking, because it does not capture the sorts of things philosophers of physics usually have in mind when they talk about symmetry.  Indeed, most physically important natural theories, including general relativity and naturalized (minimally-coupled) Maxwell theory, apparently have \emph{no} non-trivial symmetries.\footnote{Something similar can be said for \emph{gauge natural theories}, including non-Abelian Yang-Mills theory, though we do not discuss those here.  That said, electromagnetism, conceived as an Abelian Yang-Mills theory with structure group $U(1)$ does have a symmetry: namely, a global $U(1)$ symmetry, which rotates every fiber of every bundle by an element of $U(1)$.}  This is true even when we formulate Maxwell's theory using vector potentials.  There seems to be no interesting sense in which standardly cited examples of `symmetries' of well-known theories, such as diffeomorphisms, gauge transformations, conformal transformations, Poincar\'e transformations, and so on, are or give rise to symmetries of natural theories.  Whatever interest those classes of transformations may hold, it is not because they preserve the structure of the theory.  To the contrary, the role they appear to play in analysis of theories is better tracked by the senses of symmetry already discussed in this section, which are not really symmetries of the theory, per se, but diagnostics for establishing naturality and structural dependencies of various sorts within a theory.

As a final remark, we wish to contrast what we have said here with a different claim that one sometimes sees regarding general relativity, which is that the `diffeomorphism group' is the `symmetry group' of the theory. On our view, the most direct reading of that claim is that general relativity can be seen is a mathematical object whose automorphism group coincides with the `diffeomorphism group'.  (We use scare quotes, because of course there is no single diffeomorphism group -- rather, there are diffeomorphism groups for each manifold.) We have not recovered that claim, and we do not see any sense in which it is true.  Indeed, we would suggest that there is a straightforward sense, given by the notion of symmetry of a theory just introduced, in which the symmetry group of general relativity is the trivial group: as a theory, it has no symmetries at all.  That said, what we \emph{do} find is that given a fixed $n$-manifold $M$, the functor $E$ (faithfully) maps the diffeomorphism group of $M$ into the automorphisms of $EM$, and it does so in a sufficiently natural way that we can think of the equation as being `preserved' under the action of diffeomorphisms.  More generally, the functor $E$ faithfully maps the entire `diffeomorphism groupoid', i.e., the groupoid consisting of just the objects and isomorphisms of $\mathcal{M}_n$, into the `equation groupoid' $E[\mathcal{M}_n]$.  Thus, while diffeomorphisms are not automorphisms of the theory conceived as a single mathematical object, nonetheless each diffeomorphism, including automorphisms, can be seen as a `local' symmetry of the theory, i.e., an isomorphism of the theory localized to manifolds related by that diffeomorphism.

\section{The Hole Argument, Naturally}\label{sec:HoleArg}

Finally, we turn to the solution properties of natural theories.  We are particularly focused on a tension between two desiderata one might impose on a physically reasonable theory.  The first desideratum is that it be natural, in the sense we have discussed here.  The second desideratum is that the theory be deterministic.  Of course, what it means to say that a theory is deterministic is itself a matter of significant philosophical dispute, but in the context of partial differential equations there is a standard answer that seems like a plausible starting point.   We will say that a system of equations on an $n$-manifold has a \emph{well-posed initial value problem} if, given suitable `initial data' on an $(n-1)$-submanifold, there exists an open set $O$ containing that submanifold and a smooth solution on $O$, agreeing with the initial data on the submanifold, which is unique in the sense that any other such solution agreeing with the initial data coincides with that solution on $O$.  (Precise definitions are available, here, but require preliminaries that would take us too far afield.  The proposition we presently prove will not depend on these details.  See \citet{JOhn} or \citet{GerochPDEs} for an extended discussion.)  The idea is that an equation is well-posed if data about a solution on a surface -- i.e., `at a time' -- uniquely determines that solution off the surface, at least on some open set around the initial data.  

Let us say that a natural theory $E$ on a natural bundle $X:\mathcal{M}_n\rightarrow \mathcal{FB}$  is `sufficiently rich' if it admits solutions $\varphi:M\rightarrow XM$ for which there exist diffeomorphisms $\chi:M\rightarrow M$ such that $\chi_*\circ \varphi \neq \varphi\circ \chi$.  To say that a theory is sufficiently rich means that it admits solutions that are non-constant, in the sense that they vary from point to point by some standard of comparing points.  This condition is very weak.  The theory of constant real scalar fields is not sufficiently rich, but any other case of plausible interest is.  Then we have the following result.

\begin{thm}No sufficiently rich natural theory has a well-posed initial value problem.\end{thm}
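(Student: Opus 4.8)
The plan is to run a version of the hole argument, so I would argue by contradiction: suppose $(X,E,e)$ is a sufficiently rich natural theory on $n$-manifolds that nonetheless admits a well-posed initial value problem. Sufficient richness hands me an $n$-manifold $M$ (which I may take connected, passing to a component if need be), a solution $\varphi\colon M\to XM$ of $EM$, and a diffeomorphism $\chi\colon M\to M$ with $\chi_*\circ\varphi\neq\varphi\circ\chi$; equivalently, the section $\psi:=\chi_*\circ\varphi\circ\chi^{-1}$ is distinct from $\varphi$. The first point is that $\psi$ is again a solution of $EM$: this is exactly the observation recorded in Section~\ref{sec:Earman}, that $E$ carries $\chi$ to an automorphism of $EM$ whose lift takes the prolongation of $\varphi$ to that of $\psi$. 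So I already have two distinct solutions of the same equation; the work is to turn them into two solutions that \emph{also share initial data on a hypersurface}.

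The technical core is a localization step: I would show that sufficient richness actually yields a solution $\varphi$ together with a \emph{hole diffeomorphism} $h\colon M\to M$ --- one equal to the identity outside a compact set lying in an arbitrarily small coordinate ball $B$ --- with $h_*\circ\varphi\circ h^{-1}\neq\varphi$. This leans on two ingredients. The first is \emph{locality} of natural bundles: if two diffeomorphisms agree on an open $U$ then their $X$-lifts agree over $U$ (apply functoriality to $U\hookrightarrow M$), so the action of a diffeomorphism on a single fiber $(XM)_q$ depends only on its germ at $q$ (equivalently, by the finite-order property of natural bundles \citep{Kolar+etal}, on a finite jet there), and any such germ of a diffeomorphism fixing $q$ is realized by a diffeomorphism supported in an arbitrarily small ball about $q$. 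The second is the standard fragmentation/transitivity property of diffeomorphism groups: finite compositions of ball-supported diffeomorphisms act transitively on a connected manifold. Combining these: if $\varphi$ were invariant under every ball-supported diffeomorphism, then for any $\chi$ and $p$ I could move $p$ to $q=\chi(p)$ by such a composition $g$ (under which $\varphi$ is invariant), and then, as $\chi g^{-1}$ fixes $q$, match its germ at $q$ by a ball-supported $k$ fixing $q$ (under which $\varphi$ is again invariant), getting $\chi_*(\varphi(p))=(\chi g^{-1})_*g_*(\varphi(p))=(\chi g^{-1})_*(\varphi(q))=k_*(\varphi(q))=\varphi(q)=\varphi(\chi(p))$ --- contradicting sufficient richness (orientations and the components of the diffeomorphism group being handled by the usual devices). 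Hence some ball-supported $h$, supported in as small a ball $B$ about a suitable point $p_0$ as I like, witnesses the required non-invariance.

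With the hole diffeomorphism in hand, the rest goes as the hole argument does. I would pick an $(n-1)$-submanifold $\Sigma\subset M$ passing near $p_0$ but with $p_0\notin\Sigma$, and take initial data on $\Sigma$ to be the restriction there of the relevant jets of $\varphi$. Well-posedness supplies an open development $O\supseteq\Sigma$ on which $\varphi|_O$ is the unique solution with that data; since $\Sigma$ runs near $p_0$, the set $O$ is a neighborhood of $p_0$, and I choose the hole ball $B$ small enough that $B\subseteq O$ while $B\cap\Sigma=\emptyset$. Now $\varphi$ and $\psi:=h_*\circ\varphi\circ h^{-1}$ both restrict to solutions on $O$; because $h$ is the identity off $B$, they agree on $O\setminus B$, which contains a neighborhood of $\Sigma$, so they induce the same initial data on $\Sigma$; yet $\psi\neq\varphi$ at some point of $B\subseteq O$. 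This violates the uniqueness clause in the definition of a well-posed initial value problem, giving the contradiction.

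The step I expect to be the main obstacle --- and the one carrying the conceptual weight --- is the localization, and within it, arranging that the hole sits inside the guaranteed development $O$ while staying off the Cauchy surface (equivalently, that the non-invariance furnished by sufficient richness can be concentrated in a small ball near a point of one's choosing). This is precisely where the ``error'' in the hole argument becomes visible: the two competing solutions $\varphi$ and $\psi$ differ only by the lift of a diffeomorphism of $M$, so by the lights of naturality they are \emph{the same} solution, whereas the uniqueness clause of a well-posed initial value problem demands that they be \emph{equal} --- a comparison which, as the failure above shows, is not itself natural. I would close the proof by flagging this, since it is the payoff advertised in the introduction.
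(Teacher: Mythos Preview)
Your proof follows the same hole-argument strategy as the paper. The paper's own proof is considerably terser: it simply asserts that sufficient richness yields a diffeomorphism $\chi$ that is the identity outside a region of compact closure yet moves $\varphi$ inside it, then picks an initial segment disjoint from that region and observes that $\varphi$ and $\chi_*\circ\varphi\circ\chi^{-1}$ are distinct solutions with the same data. Your fragmentation-plus-germ argument actually supplies what the paper omits --- a proof that sufficient richness produces a \emph{compactly supported} (indeed ball-supported) hole diffeomorphism --- and you are more explicit than the paper about the need to fit the hole inside the development $O$ guaranteed by well-posedness, a point the paper does not raise at all. One step to tighten: your localization argument, as written, shows that $\varphi$ is non-invariant under \emph{some} ball-supported diffeomorphism, but the jump to ``in as small a ball about a suitable $p_0$ as I like'' needs another word (e.g., rerun the fragmentation argument with balls of radius $<\epsilon$ for each $\epsilon$ and extract an accumulation point, or argue directly from locality at a single point of non-invariance); without this, the order of choices --- first $\Sigma$, then $O$, then $B\subseteq O\setminus\Sigma$ --- does not obviously close. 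That said, on this issue you are already more careful than the paper.
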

\noindent Proof. Let $(E,X,e)$ be a natural theory, and consider any smooth $n$-manifold $M$.  Let $\varphi:M\rightarrow XM$ be a solution and let $\chi:M\rightarrow M$ be any diffeomorphism such that (1) $\chi$ acts as the identity outside some region $O$ of compact closure and (2) within $O$,  $\chi_*\circ \varphi \neq \varphi\circ \chi$.  (That such a diffeomorphism can be found is a consequence of sufficient richness.)  Now suppose $U$ is an initial segment that does not intersect $O$.  It follows that $\varphi'=\chi_*\circ\varphi\chi^{-1}$ is a solution for which $(1_M)_*\circ\varphi_{|U}=\varphi'\circ (1_M)_{|U}$ but $(1_M)_*\circ\varphi \neq \varphi'\circ (1_M)$. $\square$

The proof of this theorem should seem familiar: it is just the hole argument.  The point is that naturality and sufficient richness are sufficient for a version of the hole argument to go through.  (One might add that naturality and sufficient richness are also \emph{necessary} for the standard hole argument, since that argument takes for granted that the pushforward of fields is both well-defined and takes solutions to solutions.)

How does this result square with the fact that many systems of equations in physics \textit{do} have well-posted initial value problems?  Take, again, our running example: Maxwell's equations.  As we saw above, this system is not natural, and so the theorem does not immediately apply to it.  That makes sense, since Maxwell's equations are symmetric hyperbolic and linear, so they do admit a well-posed initial value problem.  On the other hand, the system can be naturalized, as discussed above.  And the theorem \textit{does} apply to the naturalized theory.  In order to recover the symmetric hyperbolic character of the system, one needs to fix a choice of background structure, by fixing a section of the bundle of background fields.  In other words, Maxwell's equations are well-posed only relative to a choice of metric.  This situation turns out to be completely general, as established by the proposition.

This result should give a different perspective on the hole argument, well-posedness, and determinism.  A system of partial differential equations on a manifold can have a well-posed initial value problem only if the equations depend on some background structure.\footnote{Note that referring to background structure is necessary here but not sufficient -- in fact, the rigidity properties of the background structure matter.  But we postpone a discussion of that topic for future work.}  Once we naturalize that background structure, well-posedness is ruined.  This is the tension we mentioned above, between well-posedness and naturality.  Well-posedness is not a `natural' condition, in the sense that no natural theory can satisfy it.

On the other hand, there are clearly senses in which natural theories \textit{can} be deterministic, though the senses of determinism is not captured by well-posedness. Instead, what we want is something like `well-posedness up to the action of diffeomorphism', which would be a condition where given initial data on some suitable submanifold, one requires existence on an open set containing that submanifold that is unique up to (unique) diffeomorphism \citep[c.f.][]{Halvorson+etal}.  This is the sort of determinism one finds for Einstein's equation, the best-studied example of a physically relevant natural theory.\footnote{The naturalized Maxwell equations are not even deterministic in this sense, because they cannot specify the behavior of the background fields.  But the Einstein-Maxwell system, which imposes Einstein's equation on the background structure and couples it to the Maxwell stress-energy tensor, would be deterministic in this sense.}  This, in turn, suggests that the tension is not between naturality and determinism, \textit{per se}, but naturality and a particular way of making determinism precise.  

From this perspective, the hole argument as it is usually deployed in the foundations of physics literature, can be seen as highlighting a salient mathematical difference between Einstein's equation and other, non-natural but frequently encountered equations, such as Maxwell's equations.  Formalist responses to the hole argument \citep{Mundy,WeatherallHoleArg, Shulman, Ladyman+Presnell, Bradley+Weatherall}, meanwhile, can be viewed as insisting that well-formed mathematical claims must be suitably natural -- and therefore that deploying something like well-posedness in the context of a natural equation involves a subtle mathematical error, a mismatch between the sort of equation under consideration and the compatible uniqueness properties of solutions.  This discussion suggests a different way forward, which is to identify more precisely what should be meant by `deterministic' for a natural theory, analogous to well-posedness but not requiring background structure, and to develop a mathematical theory of when a natural theory will be deterministic in this way. 

\section{Conclusion}\label{sec:conclusion}

We have introduced the concept of a \emph{natural equation}, which is a (natural) system of differential equations on a natural bundle, and we have suggested that this proposal completes the project initiated by \citet{March+Weatherall} to give a precise definition of a generally covariant (classical) field theory.  We have shown that whilst not all equations in classical field theory are natural, many can be \emph{naturalized}, i.e., turned into natural equations by identifying implicit background structure on which they depend, and we applied our notion of naturalization to explicate minimal coupling.  We then applied this formalism to distinguish several precise senses in which a theory might be said to have symmetries; and we used those distinctions to clarify Earman's famous symmetry matching principles.  Finally, we stated and proved a fundamental theorem showing that no natural theory can have a well-posed initial value problem.  This makes precise a common claim that generally covariant theories will generically be subject to hole argument-type constructions; but it also shows that the standard way of setting up well-posedness for hyperbolic partial differential equations is incompatible with naturality.

\appendix

\section{}\label{app:}

\begin{thm}
    Let $X:\mathcal{M}_n\rightarrow\mathcal{FM}$ be a natural bundle, with $k$th jet bundle $J^kX$, and let $M\in\mathrm{ob}(\mathcal{M}_n)$. Let $E$ be any non-empty closed embedded submanifold of $J^kXM$, and suppose that for any diffeomorphism $\varphi:M\rightarrow M$, $\varphi^k(E)=E$, where $\varphi^k(j^k_p\psi)=j^k_{\varphi(p)}\varphi^*(\psi(\varphi^{-1}(p)))$ for any local section $\psi:U\rightarrow XM$ and $p\in U$ and $(\varphi^*,\varphi)=X\varphi$. Let $\pi_{E}: E\rightarrow M$ denote the restriction of $\pi_{J^kXM}:J^kXM\rightarrow M$ to $E$. Further suppose that $M$ is connected and without boundary. Then $\pi_{E}: E\rightarrow M$ is a subbundle of $\pi_{J^kXM}:J^kXM\rightarrow M$. 
\end{thm}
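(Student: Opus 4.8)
The plan is to exploit homogeneity of the jet bundle under the diffeomorphism group together with connectedness of $M$ to upgrade the invariant submanifold $E$ to a subbundle. The key structural fact is that $J^kXM \to M$ is a natural bundle, hence its fibers are all diffeomorphic and, crucially, the group of diffeomorphisms of $M$ acts transitively on $M$ (since $M$ is connected, boundaryless, and smooth) and its lift $\varphi^k$ acts on $J^kXM$ covering that action. First I would fix a point $p_0 \in M$ and set $E_{p_0} := E \cap \pi_{J^kXM}^{-1}(p_0)$, the fiber of $E$ over $p_0$; since $E$ is a closed embedded submanifold and $\pi_{J^kXM}$ is a submersion, one checks that $E_{p_0}$ is a closed embedded submanifold of the fiber $(J^kXM)_{p_0}$ — but this requires knowing $\pi_E$ is a submersion, which is not yet given, so I would instead argue the other direction: produce local trivializations directly.

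The core step is a transitivity/transport argument. Given any $p, q \in M$, connectedness gives a diffeomorphism $\varphi: M \to M$ with $\varphi(p) = q$ that is, moreover, isotopic to the identity and compactly supported (standard: any two points of a connected manifold are related by such a diffeomorphism). Invariance $\varphi^k(E) = E$ together with the fact that $\varphi^k$ restricts to a diffeomorphism between the fibers $(J^kXM)_p$ and $(J^kXM)_q$ then forces $\varphi^k$ to carry $E_p$ diffeomorphically onto $E_q$. In particular all fibers $E_p$ are mutually diffeomorphic — call the model fiber $S$ — and each $E_p$ is a (closed, embedded) submanifold of the corresponding jet fiber. To get local triviality, I would fix $p_0$, take a chart neighborhood $U \ni p_0$, and over $U$ use a smoothly varying family of compactly supported diffeomorphisms $\varphi_x$ with $\varphi_x(p_0) = x$ (for instance, time-one flows of a smoothly chosen family of compactly supported vector fields, or translations read off in the chart extended by a bump function); the map $(x, s) \mapsto \varphi_x^k(s)$ for $s \in E_{p_0}$ then gives a local trivialization $U \times E_{p_0} \xrightarrow{\sim} \pi_E^{-1}(U)$, because $\varphi_x^k$ is a diffeomorphism of jet fibers preserving $E$, and smoothness in $x$ follows from smoothness of the family $\varphi_x$ and of the jet-prolongation construction. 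This simultaneously shows $\pi_E$ is surjective, that it is a submersion, and that it is locally trivial with fiber $E_{p_0}$ — i.e., a subbundle.

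The main obstacle I anticipate is the smoothness and well-definedness of the family $\varphi_x$ and the resulting local trivialization: one needs diffeomorphisms of $M$ (global, though compactly supported) depending smoothly on a parameter $x$ ranging over a neighborhood, with $\varphi_x(p_0) = x$ and $\varphi_{p_0} = 1_M$, and one must confirm that $\varphi_x^k$ depends smoothly on $x$ as a map on the total space of $J^kXM$ — this uses that jet prolongation $j^k$ is smooth in appropriate parameters, which follows from the smoothness built into the natural bundle functor and the explicit local formula for $\varphi^k$. A secondary subtlety is ensuring $E_{p_0}$ is genuinely an embedded submanifold of the fiber (so that "fiber $E_{p_0}$" is a legitimate smooth manifold): here I would use that $E$ is closed embedded in $J^kXM$ and that, once we have the candidate trivialization, the fiber is identified with a slice, so embeddedness is automatic a posteriori; alternatively one notes the fiber of a closed embedded submanifold over a regular value of the ambient submersion is embedded, but since regularity of $\pi_E$ is what we are proving, the cleanest route is the constructive one via $\varphi_x^k$. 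The hypothesis that $M$ is connected is used exactly to guarantee that a single model fiber works globally; boundarilessness is used to get compactly supported diffeomorphisms moving any point to any nearby point with no obstruction.
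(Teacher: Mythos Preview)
Your proposal is correct and follows essentially the same approach as the paper's proof: both exploit transitivity of $\mathrm{diff}(M)$ on the connected boundaryless $M$, lift a smoothly parametrized family of compactly supported diffeomorphisms $\varphi_x$ (built from flows of chart-constant vector fields cut off by bumps) to $J^kXM$ via $\varphi_x^k$, and use invariance $\varphi_x^k(E)=E$ to transport the fiber $E_{p_0}$ around and assemble a local trivialization. The only organizational difference is that the paper separates out surjectivity and the submersion property as independent steps (the latter via the infinitesimal version, lifting a single complete vector field and showing its prolongation is tangent to $E$), whereas you obtain all three conclusions at once from the trivialization; your handling of the ``is $E_{p_0}$ a submanifold?'' circularity---reading it off a posteriori from the ambient trivialization $\Phi:U\times (J^kXM)_{p_0}\to \pi^{-1}(U)$, under which $E$ pulls back to $U\times E_{p_0}$---is the cleanest way to close that loop.
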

\begin{proof}
    We show this in three parts:
    \begin{itemize}
        \item[(i)] $\pi_{E}:E\rightarrow M$ is surjective.
        \item[(ii)] $\mathrm{d}\pi_{E}:T_{j^k_p\psi}E\rightarrow T_pM$ is (pointwise) surjective.
        \item[(iii)] $\pi_{E}:E\rightarrow M$ is locally trivial.
    \end{itemize}

    For (i), suppose otherwise, i.e.~there is some $p\in M$ such that $p\notin\pi(E)$. But now consider any $j^k_q\psi\in E$ and any diffeomorphism $\varphi:M\rightarrow M$ with the property that $\varphi(q)=p$. (That such a $j^k_q\psi$ and $\varphi$ exist follows from the fact that $E$ is non-empty and $\mathrm{diff}(M)$ is transitive on $M$ for connected manifolds without boundary.) Then $\varphi^k(j^k_q\psi)=j^k_{\varphi(q)}\varphi^*(\psi(\varphi^{-1}(q)))=j^k_{p}\varphi^*(\psi(\varphi^{-1}(q)))\notin E$. But $\varphi^k(E)=E$, so $\pi_{E}:E\rightarrow M$ must be surjective.

    For (ii), consider any $p\in M$ and any $\xi_p\in T_pM$. We can always construct a smooth complete vector field $\xi$ on $M$ extending $\xi_p$ as follows: let $\xi'$ be any smooth vector field extending $\xi_p$, let $f:M\rightarrow \mathbb{R}$ be any smooth scalar field with support on some compact region $U\subset M$, $p\in U$ and satisfying $f(p)=1$, and let $\xi=f\xi'$. (That $\xi$ is complete is then a consequence of the fact that it is compactly supported.) Let $\gamma_q(t)$, $t\in\mathbb{R}$ denote the integral curves of $\xi$, where $\gamma_q(0)=q$, $q\in U$, and let $\varphi_t:M\rightarrow M$, $\varphi_t(q)=\gamma_q(t)$ denote the $1$-parameter family of diffeomorphisms generated by $\xi$. In turn, this induces a $1$-parameter family of diffeomorphisms $\varphi^k_t:J^kXM\rightarrow J^kXM$, generated by the vector field $\xi^k$, $\xi^k(j^k_q\psi)={\frac{\partial}{\partial t}|}_{t=0}\varphi^k_t(j^k_q\psi)\in T_{j^k_q\psi}J^kXM$. (That the lift $\varphi^k_t$ of $\varphi_t$ to $J^kXM$ is again a 1-parameter family of diffeomorphisms is a consequence of naturality, see \citet[\S20]{Kolar+etal} for details.) Since $\varphi^k_t(E)=E$, we know that $\xi^k(j^k_q\psi)\in T_{j^k_q\psi}E$ whenever $j^k_q\psi\in E$.

    But now (letting $\gamma^k$ denote the integral curves of $\xi^k$): $\mathrm{d}\pi_{E}({\frac{\partial}{\partial t}|}_{t=0}\varphi^k_t(j^k_p\psi))= \mathrm{d}\pi_{E}([\gamma^k_{j^k_p\psi}(t)]_{t=0}=[\pi_{E}\circ \gamma^k_{j^k_p\psi}(t)]_{t=0}=[\gamma_p(t)]_{t=0}=\xi_p$. Since $\xi_p$ was arbitrary, it follows that $\mathrm{d}\pi_{E}:T_{j^k_p\psi}E\rightarrow T_pM$ is (pointwise) surjective.

    Finally, we establish (iii). First, we show that the fibers of $\pi_{E}:E\rightarrow M$ at any two points in $M$ are isomorphic. Let $p\in M$ and choose any other $q\in M$. Again, using that $\mathrm{diff}(M)$ acts transitively on $M$, there is some diffeomorphism $\varphi:M\rightarrow M$ such that $\varphi(p)=q$. But $\varphi^k(E)=E$, and we know that for any $j^k_x\psi\in E$, $\pi_{E}\circ \varphi^k(j^k_x\psi)=q$ iff $x=p$, so $\varphi^k_{|\pi_{E}^{-1}(p)}:\pi_{E}^{-1}(p)\rightarrow \pi_{E}^{-1}(q)$ is a diffeomorphism.

    Now let $p\in M$, $(U, \phi)$ be a local chart with $\phi(p)=0$. We will show that there exists a local trivialisation of $\pi_{E}:E\rightarrow M$ on some subneighbourhood $U'\subset U$. For this, first let $B\subset M$ be an open ball about $0$ in $(U, \phi)$. For each $q\in B$, we have a smooth vector field $\xi_q$ (unique up to a multiplicative constant) on $B$ which is constant with respect to $(U, \phi)$, and which has a (directed) integral curve from $p$ to $q$. We can extend the $\xi_q$ to (complete) vector fields on $M$ with support on $B$. Each family of integral curves $\gamma_q$ of $\xi_q$ generates a 1-parameter group of diffeomorphisms $\varphi_t:M\rightarrow M$; in particular, for each $q$, we have some $t$ such that $\varphi_t(p)=\gamma_p(t)=q$. Denote this $\varphi_q$. Then from above: each $\varphi^k_q$ defines a diffeomorphism $\varphi^k_{q|\pi_{E}^{-1}(p)}:\pi_{E}^{-1}(p)\rightarrow \pi_{E}^{-1}(q)$. So let $U'\subset U$ be an open subneighbourhood of $B$. Then we can define a local trivialisation $\Psi:\pi_{E}^{-1}(p)\times U'\rightarrow \pi_{E}^{-1}(U')$, $(j^k_p\psi,q)\rightarrow \varphi^k_{q|\pi_{E}^{-1}(p)}(j^k_p\psi)$, which is smooth since the family of diffeomorphisms $\varphi_q$ is smoothly parametrised on $U'$ by construction, is bijective by construction, and has a smooth inverse (we can construct it using the inverses of the $\varphi_q$). Finally, by construction, $\pi_{J^kXM}\circ\Psi=\mathrm{proj}_{U'}$.
\end{proof}

\section*{Acknowledgments}
This material is based upon work supported by the National Science Foundation under Grant No. 2419967.  JOW is grateful to Bob Geroch, Hans Halvorson, David Malament, and JB Manchak for conversations on material related to this paper, and to an audience of the Southern California Philosophy of Physics group for a helpful discussion. EM acknowledges the support of Balliol College, Oxford, and the Faculty of Philosophy, University of Oxford. 

\bibliographystyle{elsarticle-harv}
\bibliography{covariance} 

\end{document}